\newtheorem{theorem}{Theorem}
\newtheorem{corollary}{Corollary}
\theoremstyle{definition}
\newtheorem{definition}{Definition}
\theoremstyle{remark}
\newtheorem{remark}{Remark}
\theoremstyle{definition}
\theoremstyle{definition}
\newtheorem{example}{Example}
\newcommand{\Sc}{\mathcal{S}}
\newcommand{\R}{\mathbb{R}}
\newcommand{\K}{\mathcal{K}}
\newcommand{\Hs}{\mathcal{H}}
\definecolor{blue}{RGB}{38,38,134}
\definecolor{darkblue}{RGB}{0,0,102}
\definecolor{lightblue}{RGB}{77,77,148}
\definecolor{gold}{RGB}{234, 170, 0}
\definecolor{metallic_gold}{RGB}{139, 111, 78}
\renewcommand{\cal}[1]{\mathcal{ #1 }}
\newcommand{\der}[2]{\frac{\mathrm{d} #1 }{\mathrm{d} #2 }}
\DeclareMathOperator*{\argmin}{argmin}
\begin{document}

\title{ \bf
Parameterized Barrier Functions \\ to Guarantee Safety under Uncertainty 
}

\author{Anil Alan$^{1}$, Tamas G. Molnar$^{2}$, Aaron D. Ames$^{2}$, G\'abor Orosz$^{1}$
\thanks{*This research is supported in part by the National Science Foundation (CPS Award \#1932091), Dow (\#227027AT) and Aerovironment.}%
\thanks{$^{1}$A. Alan and G. Orosz are with the Department of Mechanical Engineering, and G. Orosz is also with the Department of Civil and Environmental Engineering, University of Michigan, Ann Arbor, MI 48109, USA,
{\tt\small anilalan@umich.edu, orosz@umich.edu}.}%
\thanks{$^{2}$T. G. Molnar and A. D. Ames are with the Department of Mechanical and Civil Engineering, California Institute of Technology, Pasadena, CA 91125, USA,
{\tt\small tmolnar@caltech.edu, ames@caltech.edu}.}%
}

\maketitle
\thispagestyle{empty}         
\pagestyle{plain}

\begin{abstract}
Deploying safety-critical controllers in practice necessitates the ability to modulate uncertainties in control systems.  
In this context, robust control barrier functions---in a variety of forms---have been used to obtain safety guarantees for uncertain systems.
Yet the differing types of uncertainty experienced in practice have resulted in a fractured landscape of robustification---with a variety of instantiations depending on the structure of the uncertainty.
This paper proposes a framework for generalizing these variations into a single form: \emph{parameterized barrier functions (PBFs)}, which yield safety guarantees for a wide spectrum of uncertainty types.
This leads to controllers that enforce robust safety guarantees while their conservativeness scales by the parameterization. 
To illustrate the generality of this approach, we show that input-to-state safety (ISSf) is a special case of the PBF framework, whereby improved safety guarantees can be given relative to ISSf.  
\end{abstract}

\section{Introduction} \label{sec:intro}

Control barrier functions (CBFs) \cite{ames2017control} have become a popular tool for synthesizing safety-critical controllers due to their generality and relative ease of synthesis and implementation. Safety is encoded by a single scalar inequality constraint: ${\dot{h} \geq - \alpha(h)}$ where ${h \geq 0}$ implies system safety, $\alpha$ is an extended class $\mathcal{K}$ function, and $\dot{h}$ the derivative of $h$ along the solutions of the system. 
The efficacy of this approach has been demonstrated in a variety of applications such as multi-agent systems \cite{lindemann2019control}, robotic manipulators \cite{cortez2020control}, autonomous vessels \cite{thyri2020reactive} and autonomous trucks \cite{Alan__arxiv:22}. 
One of the main challenges in obtaining formal safety guarantees with CBFs in practice is uncertainty: both of the internal model used to synthesize the CBF controller, and the external environment with which the system interacts.  
Since CBFs use models to calculate safe actions, a mismatch between a system and its model can lead degradations in safety \cite{xu2015robustness}.

Robustness in CBF-based methods is typically achieved by introducing a robustifying term in the safety constraint: ${\dot{h} \geq - \alpha(h) + \sigma}$ where the form of $\sigma$ is dictated by the type of uncertainty.  In one of the first works on robust CBFs \cite{jankovic2018robust}, 
the term $\sigma$ was added based upon a bound on the uncertainties with the result being robust safety. 
Later, different observer and identification techniques have been proposed to alleviate the conservativeness of robust controllers by estimating the uncertainty, or considering specific classes of uncertainties \cite{black2022adaptiveKoopman,alan2022dob,lopez2020robust,isaly2021adaptive,cohen2022robust,buch2022robust}. 
Data-driven methods account for uncertainties in a similar fashion where a sufficient condition for the safety is found using properties of uncertainties  \cite{taylor2021data,emam2022data,jin2023data}. 
Learning can also be utilized to estimate the term $\sigma$ in an episodic fashion \cite{taylor2020learning}---this has been deployed successfully on robotic systems \cite{csomay2021episodic}.
In the case of stochastic estimation techniques, such as Gaussian processes, probabilistic safety guarantees are obtained using chance constraints with the standard deviation of the process used as an upper confidence bound \cite{castaneda2021pointwise,akella2022GP}.

\begin{figure}[t]
	\centering
 \begin{subfloat}
	{\includegraphics[width=.45\textwidth, valign = t]{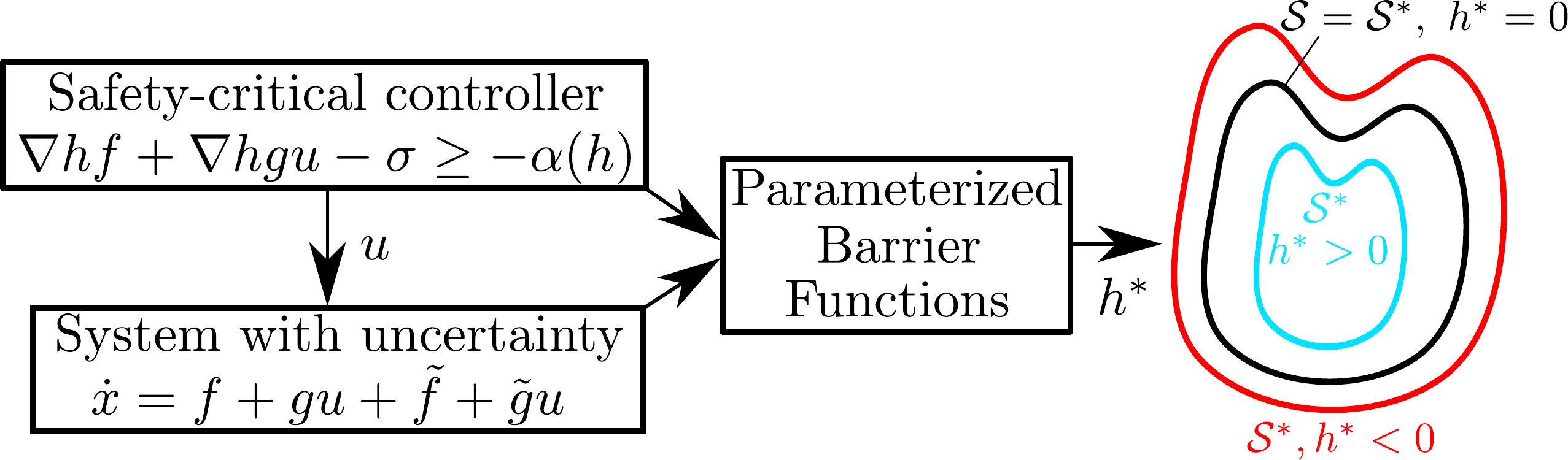}}
\end{subfloat}
	\caption{Illustration of the parameterized barrier (PBF) function framework to obtain safety guarantees for systems with a wide range of uncertainties.}
	\label{fig:intro}
 \vspace{-6 mm}
\end{figure}

To quantify the effect of uncertainties on safety, it is important to characterize how adding a robustness term, $\sigma$, impacts the ability to satisfy the safety specification, ${h \geq 0}$. Input-to-state safety (ISSf) \cite{kolathaya2018input} provides a means to quantify this relationship.  In particular, in the case of bounded uncertainties, a robustifying $\sigma$ term can be found that enforces safety with respect to the expanded safety specification: ${h \geq -d}$ with some specific ${d > 0}$.  This trade-off between robustness and safety has been elaborated upon with less conservative ISSf conditions that are ``tunable'' \cite{alan2021safe}.
ISSf has proven to be especially useful when implementing CBFs in practice; for example, in the context of safety aware control of quadrupeds \cite{cosner2022safetyaware}, and longitudinal control of full-scale trucks \cite{Alan__arxiv:22}. However, the fact that ISSf ``grows'' the safe set (via ${h \geq -d}$) prevents the analysis of the impacts of uncertainty on the original safe set (${h \geq 0}$). 

The goal of this paper is to generalize the concept of robust CBFs through an extension of the notion of ISSf.  
In particular, to obtain extended safety guarantees, we formulate \emph{parameterized barrier functions (PBFs)}: ${H \triangleq h-h^*}$, where $h$ is the original CBF and $h^*$ is a safety parameter. 
The parameterization of $H$ allows for the relaxation of the strict robust safety condition expressed in the robust CBF formulation. This gives us flexibility to establish safety guarantees for other levels sets of the CBF (via ${h \geq h^*}$ parameterized by $h^*$) in the case that the nominal CBF conditions are not met. 
Importantly, we connect the ISSf framework to PBFs and show that it is possible to obtain improved safety guarantees for ISSf-based controllers.
An inverted pendulum example is used throughout to illustrate the key concepts.

\section{Background}
\label{sec:CBF}
Consider a nonlinear control system of the form:
\begin{equation}
\label{eq:sys}
    \dot{x}=f(t,x) + g(t,x) u,
\end{equation}
with state ${x\in\R^n}$ and input ${u\in\R^m}$. The functions ${f:\R\times\R^n\to\R^n}$ and ${g:\R\times\R^n\to\R^{n\times m}}$ are locally Lipschitz continuous in $x$ and piece-wise continuous in $t$. A feedback controller ${k:\R\times\R^n\to\R^m}$, ${u=k(t,x)}$, that is locally Lipschitz continuous in $x$ and piece-wise continuous in $t$ implies there exists a time interval ${I(t_0,x_0)\subseteq\R}$ for each initial condition ${x(t_0)=x_0}$ such that the closed loop system has a unique solution $x(t)$ for all $t\in I(t_0,x_0)$ \cite{perko2013differential}. For convenience we take ${t_0=0}$ and assume that the solution exists for all time, that is, ${I(0,x_0)=[0,\infty)}$ for all ${x_0\in\R^n}$. 

Safety is formally defined as the forward invariance of a set in the state space. We define the 0-superlevel set of a continuously differentiable function ${h:\R^n\to\Hs}$, ${\Hs \subseteq \R}$:
\begin{align}
\label{eq:Sc}
    \Sc = \left\{ x\in\R^n ~\left|~ h(x)\geq0 \right.\right\},
\end{align}
such that $\Sc$ is nonempty and has no isolated points,
and we say that the system \eqref{eq:sys} with a controller ${u=k(t,x)}$ is safe with respect to the set $\Sc$ if ${x_0 \in \Sc \implies x(t)\in \Sc}$ for all ${t\geq0}$ and $x_0\in\Sc$.
Control barrier functions~\cite{ames2017control} give us tools to synthesize controllers with safety guarantees. 
\begin{definition}[\!\!\cite{ames2017control}]
    A continuously differentiable function $h$ is a {\em control barrier function  (CBF)} for \eqref{eq:sys} on $\Sc$ if 0 is a regular value\footnote[1]{If ${h(x)=q \implies \nabla h(x)\neq0}$, then $q$ is a regular value of $h$.} and 
    there exists a function $\alpha\in\K_\infty^{\rm e}$\footnote[2]{Function ${\alpha:\R\to\R}$ belongs to \textit{extended class-$\cal{K}_\infty$} (${\alpha\in\cal{K}_{\infty}^{\rm e}}$) if it is continuous, strictly increasing, ${\alpha(0)=0}$, and ${\lim\limits_{r\to\pm\infty}\alpha(r)=\pm\infty}$.}
    such that the following holds for all $t\geq0$ and $x\in\Sc$:
    \begin{align}
    \label{eq:CBF}
        \sup_{u\in\R^m} \left[\nabla h(x) f(t,x) + \nabla h(x) g(t,x)u \right] > -\alpha(h(x)).
    \end{align}
\end{definition}

The existence of a CBF implies that the set of controllers:
\begin{align}
K_{\rm CBF}(t,x) = \{ u\in\R^m  ~|~  &\nabla h(x) f(t,x) \\ &+ \nabla h(x) g(t,x)u \geq - \alpha(h(x))  \} \nonumber
\end{align}
is not empty, and the main result in \cite{ames2017control} states that controllers taking values in $K_{\rm CBF}$ ensure safety:
\begin{theorem}[\!\!\cite{ames2017control}]
    Let $h$ be a CBF for \eqref{eq:sys} on $\Sc$. Then, any controller ${u=k(t,x) \!\in\! K_{\rm CBF}(t,x)}$ renders \eqref{eq:sys} safe w.r.t. $\Sc$.
\end{theorem}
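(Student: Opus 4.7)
The plan is to verify forward invariance of $\Sc$ by deriving a scalar differential inequality for $h$ along closed-loop trajectories and then arguing that this inequality forbids $h$ from crossing zero. First, fix an initial condition $x_0 \in \Sc$, i.e.\ $h(x_0) \geq 0$. By the hypotheses on $f$, $g$, and $k$, the closed-loop system $\dot{x} = f(t,x) + g(t,x) k(t,x)$ admits a unique solution $x(t)$ on $[0, \infty)$. Since $k(t,x) \in K_{\rm CBF}(t,x)$ by assumption, the chain rule gives $\dot{h}(x(t)) = \nabla h(x(t)) \big[ f(t,x(t)) + g(t,x(t)) k(t,x(t)) \big] \geq -\alpha\bigl(h(x(t))\bigr)$ for all $t \geq 0$ while $x(t) \in \Sc$.

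The second step is to conclude that the scalar signal $\eta(t) \defeq h(x(t))$, which satisfies $\dot\eta(t) \geq -\alpha(\eta(t))$ with $\eta(0) \geq 0$, remains nonnegative. I would argue by contradiction: if $\eta(t^*) < 0$ for some $t^* > 0$, then by continuity there is a largest $t_1 \in [0, t^*)$ with $\eta(t_1) = 0$, so $\eta(t) < 0$ on $(t_1, t^*]$. Because $\alpha \in \classK_\infty^{\rm e}$ is strictly increasing with $\alpha(0)=0$, we have $\alpha(\eta(t)) < 0$ on this interval, and therefore $\dot\eta(t) \geq -\alpha(\eta(t)) > 0$. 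Integrating from $t_1$ to $t^*$ then yields $\eta(t^*) > \eta(t_1) = 0$, contradicting $\eta(t^*)<0$. Hence $h(x(t)) \geq 0$ and thus $x(t) \in \Sc$ for all $t \geq 0$.

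The main obstacle I anticipate is the limited regularity of $\alpha$: since $\alpha$ is only required to be continuous and strictly increasing, the standard comparison lemma applied to the auxiliary ODE $\dot y = -\alpha(y)$ does not directly apply without a uniqueness hypothesis. This is precisely why I favor the sign-based contradiction argument above — it sidesteps existence/uniqueness for the comparison system and uses only that $\alpha$ and the argument of $\alpha$ share a sign. A secondary technicality is justifying differentiability of $t \mapsto h(x(t))$ on the interval $(t_1, t^*]$; this is handled by the local Lipschitz/piecewise continuity assumptions on $f$, $g$, $k$, which make $x$ continuously differentiable wherever the closed-loop right-hand side is continuous in $t$, and by absolute continuity of $x$ together with $h \in C^1$ on any set of measure-zero discontinuities in $t$, so the integral step remains valid. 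The regular-value assumption on $h$ at $0$ plays no direct role here; it merely ensures that $\Sc$ is a well-posed set with nonempty interior along the boundary where the argument operates.
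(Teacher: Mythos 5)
Your overall strategy (reduce to a scalar differential inequality for $\eta(t)=h(x(t))$ and rule out a zero crossing by contradiction) is a genuinely different route from the one the paper relies on, and it has a gap at exactly the point where the two routes diverge. In your first step you correctly note that the inequality $\dot\eta(t)\geq-\alpha(\eta(t))$ is only available \emph{while} $x(t)\in\Sc$: Definition~1 imposes \eqref{eq:CBF} only for $x\in\Sc$, so $K_{\rm CBF}(t,x)$ is only guaranteed to be nonempty---and the closed-loop constraint $\dot h\geq-\alpha(h)$ only guaranteed to be enforceable---on $\Sc$. But your contradiction argument then applies this inequality on the interval $(t_1,t^*]$, where by construction $\eta(t)<0$ and hence $x(t)\notin\Sc$. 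On that interval the hypotheses give you no control over $\dot\eta$ at all, so the conclusion $\dot\eta>0$ there is unsupported and the integration step collapses. The natural patch---using only the boundary information $\dot\eta(t_1)\geq-\alpha(0)=0$ at the crossing time---is not enough either, since a nonnegative derivative at a single instant does not prevent $\eta$ from becoming negative immediately afterwards (consider $\eta(t)=-(t-t_1)^2$).

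What actually closes the argument is a boundary (Nagumo-type) invariance theorem: on $\partial\Sc$ one has $h(x)=0$ and hence $\dot h\geq-\alpha(0)=0$, and since $0$ is a regular value, $\nabla h\neq0$ on $\partial\Sc$, so the closed-loop vector field satisfies the subtangentiality condition and $\Sc$ is forward invariant. This is precisely how the paper proves the analogous Theorem~\ref{theo:flexible} (show $\dot H\geq0$ on $\partial\Sc^*$, use that $0$ is a regular value of $H$, and invoke \cite{blanchini2008set}), and it is why the regular-value hypothesis---which you dismiss as playing ``no direct role''---is in fact essential. Your comparison-style argument can be made rigorous only under the stronger hypothesis, used in the original formulation of \cite{ames2017control}, that the CBF condition holds on an open set strictly containing $\Sc$, so that the differential inequality remains valid in a neighborhood outside $\Sc$; as the theorem is stated here, that hypothesis is not available.
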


\section{Robust Control Barrier Functions}
\label{sec:RCBF}
Safety guarantees established by CBFs may deteriorate in the presence of an uncertainty in the model. Consider: 
\begin{equation}
\label{eq:sysp}
    \dot{x} = f(t,x) + g(t,x) u + \tilde{f}(t,x) + \tilde{g}(t,x) u,
\end{equation}
where the unknown functions ${\tilde{f}:\R_{\geq0}\times\R^n\to\R^n}$ and ${\tilde{g}:\R_{\geq0}\times\R^n\to\R^{n\times m}}$ are assumed to be locally Lipschitz in $x$ and piece-wise continuous in $t$.
Uncertainties $\tilde{f}$ and $\tilde{g}$ are often called as \emph{additive} and \emph{multiplicative} uncertainties, respectively, emphasizing their relationship with the input $u$ in the dynamics.
Their effect on safety is seen in $\dot{h}$:
\begin{align}
    \dot{h}(t,x,u) =& \overbrace{\nabla h(x) f(t,x) + \nabla h(x) g(t,x) u}^{\dot{h}_{\rm n}(t,x,u)} \nonumber\\
    &+ \underbrace{\nabla h(x) \tilde{f}(t,x)}_{L_{\tilde{f}}h(t,x)} + \underbrace{\nabla h(x) \tilde{g}(t,x)}_{L_{\tilde{g}}h(t,x)}u ,
\label{eq:hdot_p}
\end{align}
where $\dot{h}_{\rm n}$ denotes the \emph{known} portion of $\dot{h}$ while the Lie derivatives $L_{\tilde{f}}h(t,x)$ and $L_{\tilde{f}}h(t,x)$ are unknown.
A controller $u=k(t,x)\in K_{\rm CBF}(t,x)$ yields:
\begin{equation}
    \dot{h}(t,x,k(t,x)) \geq -\alpha(h(x)) + L_{\tilde{f}}h(t,x) + L_{\tilde{g}}h(t,x)k(t,x),
\end{equation}
and no longer satisfies the condition ${\dot{h} \geq - \alpha(h)}$.

\renewcommand{\arraystretch}{1.0} 
\begin{table*}[t]
\centering
\begin{tabular}{c|cc|c}
~ & ~ & \textbf{Method Summary}   &  $ {\sigma(t,x,u)}$                   \\ \hline \hline

\multirow{18}{*}{\rotatebox[]{90}{RCBF}} & {\cite{jankovic2018robust}}              & Bounded uncertainty: $\|\tilde{f}(t,x)\| \leq {p}$.   & {$ \|\nabla h(x) \|{p}$}   \\ \cline{2-4}

& \multirow{2}{*}{\cite{agrawal2023stateobserver}}        & Bounded uncertainty: $\|\tilde{f}(t,x)\| \leq p$,   & \multirow{2}{*}{$ \kappa(h(x)) \|\nabla h(x) \|p$}   \\ 
& & continuous non-increasing $\kappa$ with $\kappa(0)=1$ & \\ \cline{2-4}

&\multirow{2}{*}{\cite{emam2022data}}    & $\tilde{f}$ is a convex hull of functions $\psi_i(x), i=1,\cdots,q$,  &  $-\min_{i\in\{1,..,q\}} \nabla h(x) \phi_i(x)$   \\ 
                                     &   & $\tilde{g}$ is a convex hull of functions $\rho_i(x), i=1,\cdots,q$,   &  $-\min_{i\in\{1,..,q\}} \nabla h(x) \rho_i(x) u$ \\ \cline{2-4}

&\cite{cohen2022robust} & $[ \tilde{f}, (\tilde{g}~{\rm diag}(u))^\top ]^\top = \psi(x,u)\theta$, and $\exists A,b$ s.t. $A\theta\leq b$   &   $\inf\limits_{A\theta \leq b} \nabla h(x) \psi(x,u) \theta$     \\ \cline{2-4}

&\multirow{2}{*}{\cite{buch2022robust}} & Sector bounded nonlinear perturbation at input,  &  \multirow{2}{*}{$ \left( L_gh(x) - L_{g_{\rm s}}h(x)\right) u + \theta \| u \| \| L_{g_{\rm s}}h(x) \| $}    \\ 
                 &   & i.e., $\exists\alpha,\beta$ defining $g_{\rm s}=\frac{\alpha+\beta}{2}g$ and $\theta=\frac{\beta-\alpha}{\beta+\alpha}$.   &       \\  \cline{2-4}

& \cite{black2022adaptiveKoopman} & $\hat{f}$ estimates $\tilde{f}$ and $b_d$ defines an error band.   & $-\nabla h(x) \hat{f}(x) + b_d(t)$  \\ \cline{2-4}

& \multirow{2}{*}{\cite{alan2022dob}}      & $b(t,x)=L_{\tilde{f}}h(t,x)$ is Lipschitz in $x$ (constant $L_b$), & \multirow{2}{*}{$ -\hat{b}(t,x) + L_b/k_b $} \\ 
                                       &  & $\hat{b}$ estimates $b$ and $k_b$ is estimation gain.   & \\ \cline{2-4}

& \multirow{2}{*}{\cite{isaly2021adaptive}} & $L_{\tilde{f}}h(t,x)=\rho(t,x)\theta$, $\|\theta\|\leq\overline{\theta}$ & \multirow{2}{*}{$ \min \left\{ \|\rho(t,x)\| \overline{\theta}, -\rho(t,x)\hat{\theta}(t) + \|\rho(t,x)\| \tilde{\theta}_{U}(t)               \right\}$} \\ 
                                       &  & $\hat{\theta}$ estimates $\theta$ and $\tilde{\theta}_{U}$ is upper error bound.   & \\ \cline{2-4}

&\multirow{2}{*}{\cite{taylor2021data}} & $\tilde{f}$ and $\tilde{g}$ are Lipschitz in $x$ (with $L_{\tilde{f}}, L_{\tilde{g}}$), $\exists N$   &  $ {\sum\limits_{i=1}^N} \left( \lambda_i^T \tilde{F}_i-\|\lambda_i\| (L_{\tilde{f}}+L_{\tilde{g}}\|u_i\|)\|x-x_i\| \right), $     \\ 
                 &   & data $x_i$, $u_i$ and $\dot{x}_i$ so $\tilde{F}_i=\dot{x}_i-{f}(x_i)+{g}(x_i)u_i$     & $\lambda_i$ are Lagrange multipliers. \\ \cline{2-4}

&\multirow{2}{*}{\cite{jin2023data}} & $\dot{h}_{\rm n}(t,x,u)=0$, $\dot{h}$ is Lipschitz in $x$ and $u$  &  \multirow{2}{*}{$ \min\limits_{i\in[1\cdots N]} \left[-\dot{h}_i + L_x \|x-x_i\| + L_u\|u-u_i\| \right] $} \\
                                &    & (with $L_x,L_u$), $\exists N$ data $x_i$, $u_i$ and $\dot{h}_i$   &     \\ \hline \hline      

\multirow{2}{*}{\rotatebox[]{90}{ISSf}} & \multirow{2}{*}{\cite{alan2021safe}}           & Bounded uncertainty, continuously  & \multirow{2}{*}{$ \dfrac{\|\nabla h(x) \|^2}{\epsilon(h(x))}$}   \\ 
& &   differentiable $\epsilon>0$ with $\der{\epsilon}{r}\geq0, \forall r$  & \\ \hline \hline

\end{tabular}
\caption{A brief summary of robust control barrier function (RCBF) and input-to-state safety (ISSf) based methods for robust safety-critical control, with the corresponding $\sigma$ term used in~\eqref{eq:KRCBF}, to provide safety with robustness against the uncertainties in~\eqref{eq:sysp}.}
\vspace{-5 mm}
\label{tab:sigmas}
\end{table*}

In the literature this problem is often addressed by adding a compensation term to the safety constraint \eqref{eq:CBF} for robustness against the uncertainty.
To capture this term for a variety of approaches, we generalize the notion of robust CBF, which was first proposed in \cite{jankovic2018robust} using a specific compensation term for a specific type of uncertainty.
\begin{definition}
    A continuously differentiable function $h$ is a {\em robust control barrier function (RCBF)} for \eqref{eq:sysp} on $\Sc$ if 0 is a regular value of $h$ and there exist functions ${\sigma:\R_{\geq0}\times\R^n\times\R^m\to\R}$ and ${\alpha\in\K_\infty^{\rm e}}$ such that the following holds for all ${t\geq0}$ and ${x\in\Sc}$:
\begin{align}
\label{eq:RCBF}
    \sup_{u\in\R^m} \left[ \dot{h}_{\rm n}(t,x,u) - \sigma(t,x,u)\right] > -\alpha(h(x)).
\end{align}
\end{definition}
The compensation term $\sigma$ allows one to cancel the undesired effects of uncertainties on safety.
Similar to CBFs, the existence of a RCBF implies that the set of controllers:
\begin{align}
\resizebox{1\hsize}{!}{
    $K_{\rm RCBF}(t,x) = \{ u\in\R^m  ~|~  \dot{h}_{\rm n}(t,x,u) - \sigma(t,x,u) \geq - \alpha(h(x))  \}$
    }\!
    \label{eq:KRCBF}
\end{align}
is not empty.
Then, the following theorem, generalized from \cite{jankovic2018robust}, gives a sufficient condition to obtain robust safety results:
\begin{theorem}\label{theo:robust}
    Let $h$ be a RCBF for \eqref{eq:sysp} on $\Sc$ with $\sigma$ satisfying:
    \begin{equation}
        \label{eq:idealcomp}
    L_{\tilde{f}}h(t,x) + L_{\tilde{g}}h(t,x) u + \sigma(t,x,u) \geq 0,
    \end{equation}
    for all ${t \geq 0}$, ${x\in\partial\Sc}$ and ${u\in\R^m}$. Then, any controller ${u=k(t,x)\in K_{\rm RCBF}(t,x)}$ renders \eqref{eq:sysp} safe w.r.t. $\Sc$.
\end{theorem}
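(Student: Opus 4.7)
The plan is to verify that along the closed-loop uncertain system \eqref{eq:sysp}, the barrier $h$ satisfies Nagumo's tangency condition on $\partial\Sc$; by the standing regularity hypotheses (which guarantee existence and uniqueness of solutions), this directly yields forward invariance of $\Sc$.

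First I would take any ${u = k(t,x) \in K_{\rm RCBF}(t,x)}$ and compute $\dot h$ along the true dynamics \eqref{eq:sysp}. Using the decomposition \eqref{eq:hdot_p},
\begin{equation*}
\dot{h}(t,x,k(t,x)) = \dot{h}_{\rm n}(t,x,k(t,x)) + L_{\tilde{f}}h(t,x) + L_{\tilde{g}}h(t,x)\,k(t,x).
\end{equation*}
Membership in $K_{\rm RCBF}$ and definition \eqref{eq:KRCBF} yield
\begin{equation*}
\dot{h}_{\rm n}(t,x,k(t,x)) - \sigma(t,x,k(t,x)) \geq -\alpha(h(x)),
\end{equation*}
and adding these two relations gives
\begin{equation*}
\dot{h}(t,x,k(t,x)) \geq -\alpha(h(x)) + \bigl[\sigma(t,x,k(t,x)) + L_{\tilde f}h(t,x) + L_{\tilde g}h(t,x)\,k(t,x)\bigr].
\end{equation*}

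Next I would specialize this bound to points ${x \in \partial\Sc}$. Because $0$ is a regular value of $h$, the boundary $\partial\Sc$ coincides with $\{x : h(x) = 0\}$, and ${\alpha \in \K_\infty^{\rm e}}$ implies $\alpha(h(x)) = \alpha(0) = 0$ there. The hypothesis \eqref{eq:idealcomp} is precisely the statement that the bracketed term is non-negative on $\partial\Sc$ for every ${u \in \R^m}$, and therefore for the particular choice ${u = k(t,x)}$. Consequently,
\begin{equation*}
\dot{h}(t,x,k(t,x)) \geq 0 \quad \text{for all } t \geq 0,\ x \in \partial\Sc,
\end{equation*}
which is exactly Nagumo's sub-tangential condition for the sublevel set description of $\Sc$. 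Combining this with the local Lipschitz and piecewise-continuity properties of $f$, $g$, $\tilde f$, $\tilde g$ and $k$, Nagumo's theorem (equivalently, the proof strategy of \cite{ames2017control}) gives ${x(t) \in \Sc}$ for all ${t \geq 0}$ whenever ${x_0 \in \Sc}$, which is the desired safety conclusion.

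The main obstacle to appreciate is that, unlike in the nominal CBF case, the RCBF inequality does \emph{not} translate into $\dot h \geq -\alpha(h)$ throughout $\Sc$: the compensation $\sigma$ is only required by \eqref{eq:idealcomp} to dominate the unknown Lie derivatives on the boundary, not in the interior. The resolution is to observe that forward invariance only requires a tangency statement at $\partial\Sc$, where $\alpha(h)$ vanishes, so \eqref{eq:idealcomp} is exactly the weakest pointwise condition sufficient for safety, and no extension into $\interior(\Sc)$ is needed.
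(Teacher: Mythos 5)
Your proof is correct and follows essentially the same route as the paper: the paper does not prove Theorem~\ref{theo:robust} directly but establishes it as the special case ${h^*=0}$ of Theorem~\ref{theo:flexible}, whose proof likewise computes $\dot h$ along \eqref{eq:sysp}, invokes membership in $K_{\rm RCBF}$, restricts to the boundary where $\alpha(h(x))=0$ so that \eqref{eq:idealcomp} gives ${\dot h \geq 0}$, and then concludes forward invariance from the regular-value condition via the Nagumo-type result in \cite{blanchini2008set}. Your closing observation that the compensation only needs to dominate the unknown Lie derivatives on $\partial\Sc$ is exactly the point the paper exploits.
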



\begin{remark}
    Robust safety-critical controller design is often formulated as the optimization problem:
    \begin{align}
    \label{eq:QP}
    \smash{k(t,x) =  \,\,\underset{u\in\R^m}{\argmin}}  &  \quad \| u-k_{\rm d}(t,x) \|^2  \\
    \mathrm{s.t.} \quad & \quad \dot{h}_{\rm n}(t,x,u) - \sigma(t,x,u) \geq - \alpha(h(x)), \nonumber
    \end{align}
    where ${k_{\rm d}:\R_{\geq0}\times\R^n\to\R^m}$ is a desired controller. 
\end{remark}

A plethora of methods has been proposed in the literature to design $\sigma$; see a list in Table~\ref{tab:sigmas} for RCBF-based methods as well as an input-to-state safety-based method that will be described in Section~\ref{sec:ISSf}. 
To illustrate robust safety we now consider a certain class of uncertainty with bounded additive term and no multiplicative term:
\begin{equation}
\label{eq:pbar}
    \|\tilde{f}(t,x)\| \leq {p}, \quad
    \tilde{g}(t,x)=0,
\end{equation}
for all ${t \geq 0}$, ${x \in \R^n}$ with some bound ${p\geq0}$. Inspired by \cite{jankovic2018robust}, we will use a compensation term of the form:
\begin{equation}
\label{eq:sigmaRBCF}
    \sigma(t,x,u) = \| \nabla h(x) \| {{p}},
\end{equation}
which, along with \eqref{eq:pbar}, implies that \eqref{eq:idealcomp} is satisfied, and thus the controller \eqref{eq:QP} with \eqref{eq:sigmaRBCF} keeps the set $\Sc$ safe. Results obtained through Theorem~\ref{theo:robust} are illustrated using an inverted pendulum example with a time varying uncertainty. 

\begin{example} 
\label{ex:pendulum}

Consider the inverted pendulum in Fig.\ref{fig:IVP}(a) that consists of a massless rod of length $l$ and a concentrated mass $m$.
The pendulum is actuated with a torque $u$, while an unknown external force $F(t)$ is acting horizontally on the mass.
With the angle $\theta$, angular velocity $\dot{\theta}$, and state ${x = \begin{bmatrix}\theta & \dot{\theta} \end{bmatrix}^\top}$, the equation of motion of the pendulum reads:
\begin{equation}
    \label{eq:IVPmodel}
    \dot{x} = \underbrace{\begin{bmatrix} x_2 \\ \frac{g}{l} \sin x_1 \end{bmatrix}}_{f(t,x)} + \underbrace{\begin{bmatrix} 0 \\ \frac{1}{ml^2} \end{bmatrix}}_{g(t,x)} u + \underbrace{\left[ \begin{matrix} 0 \\ \frac{F(t)}{ml}\cos x_1 \end{matrix}   \right]}_{\tilde{f}(t,x)},
\end{equation}
where $g$ is the gravitational acceleration.
All the parameters used in this example are given in Table~\ref{tab:param}. 

The external force $F(t)$ yields an additive uncertainty $\tilde{f}$ ($\tilde{g}(t,x)\equiv0$). We assume that there exists an upper bound $\overline{F}$ such that $|F(t)|\leq\overline{F},~\forall t\geq0$, which yields ${p}=\frac{\overline{F}}{ml}$. A piece-wise continuous force is considered for simulations:
\begin{equation}
\label{eq:ivp_dist}
    F(t) = \overline{F} \big( 1-2s(t-5)+s(t-10)+s(t-15) \big),
\end{equation}
where $s$ is the Heaviside function. 

We seek to design a control torque $u$ such that we keep the pendulum upright within a given safe region of angles, even with the disturbance $F(t)$.
The set $\Sc$ is defined using:
\begin{equation}
\label{eq:IVP_h}
    h(x) = 1 - \frac{1}{2}x^\top A x, \quad
    A =
    \begin{bmatrix}
    2 q_1^2  & q_1 q_2 \\
    q_1 q_2  &2 q_2^2
    \end{bmatrix},
\end{equation}
with parameters ${q_1,q_2>0}$ given in Table~\ref{tab:param}.
The resulting set $\Sc$ is the black ellipse in Fig.~\ref{fig:IVP}(b). Note that ${\nabla h(x)=0}$ only if ${x=0}$, while ${h(0)\!=1}$, thus 0 is a regular value of $h$.

\begin{table}[b]
    \centering
     \vspace{-4 mm}
    \begin{tabular}{|c|c|c|}
    \hline
     $g=10$ m/s$^2$      & $m=2$ kg                & $l=1$ m         \\ \hline
     $\overline{F}=2$ N  & $q_1=4$ 1/rad           & $q_2=2$ s/rad    \\ \hline
     $\alpha_c=8$ 1/s    & $K_{\rm p}=0.6$ 1/s$^2$ & $K_{\rm d}=0.6$ 1/s   \\ \hline 
    \end{tabular}
    \caption{Parameters used for Example 1.}
    \label{tab:param}
\end{table}

A desired controller is selected as:
\begin{equation}
    \label{eq:ivp_kn}
    k_{\rm d}(x) = ml^2 \left( -g/l\sin x_1-K_{\rm p}x_1-K_{\rm d}x_2 \right)
\end{equation}
with parameters ${K_{\rm p},K_{\rm d}>0}$. We use \eqref{eq:QP} as robust safety-critical controller with $\sigma$ in \eqref{eq:sigmaRBCF} and ${\alpha(r) = \alpha_c r}$, ${\alpha_c>0}$. Simulation results are depicted in Fig.~\ref{fig:IVP}(b) as a blue curve. The controller successfully keeps the system safe w.r.t. $\Sc$. 

\end{example}

\begin{figure}[t]
	\centering
    \includegraphics[width=0.45\textwidth, valign = c]{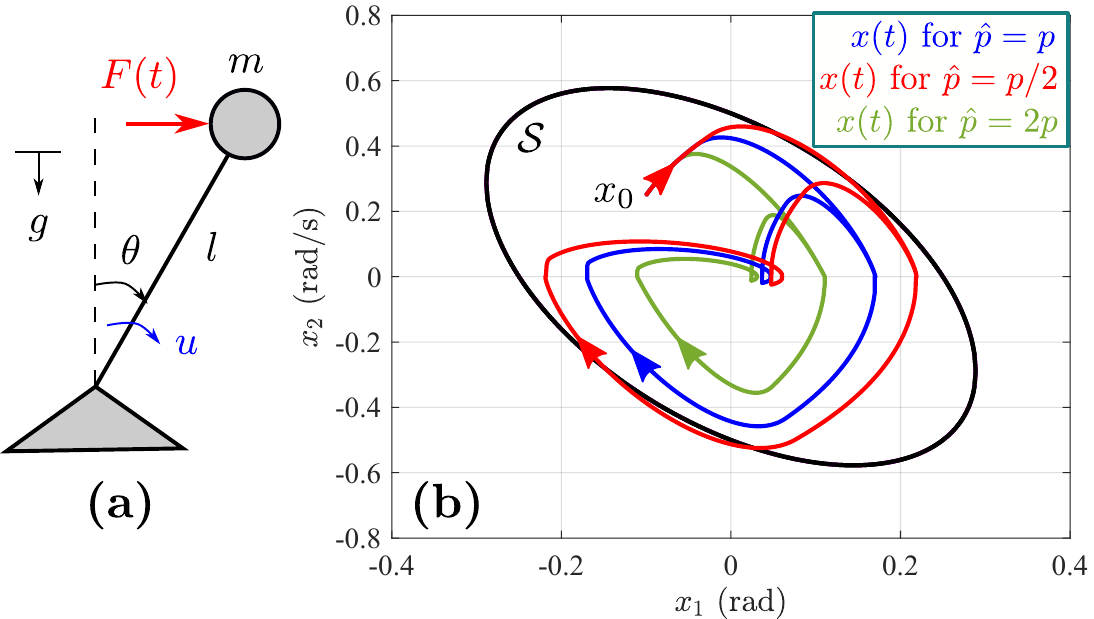}
	\caption{(a) Inverted pendulum model. (b) The safe set $\Sc$ (black ellipse) and simulated trajectories (colored curves) with the controller \eqref{eq:QP} for different values of the estimated uncertainty bounds; cf. \eqref{eq:sigmaRBCF} and \eqref{eq:sigmaRBCFapprox}.}
	\label{fig:IVP}
 \vspace{-7 mm}
\end{figure}

To achieve robust safety, the compensation term $\sigma$ is typically designed based on certain properties of $\tilde{f}$ and $\tilde{g}$ such as the upper bound ${p}$ in \eqref{eq:pbar}; see Table~\ref{tab:sigmas}. In practice, these properties may be hard to estimate, thus the compensation \eqref{eq:idealcomp} required for robust safety may not be realized. 
For example, if ${p}$ in \eqref{eq:pbar} is not known precisely, one may rely on an estimation $\hat{p}$ of $p$ instead, with the compensation term:
    \begin{equation}
    \label{eq:sigmaRBCFapprox}
        \sigma(t,x,u) = \| \nabla h(x) \| \hat{p}.
    \end{equation}
Then, under-approximating the size of the uncertainty may yield safety degradation, while over-approximation may 
induce conservative behavior that is not captured by Theorem~\ref{theo:robust}.
We illustrate these for the inverted pendulum problem.

\begin{example} 
\label{ex:robust}
    Consider the system in Example~\ref{ex:pendulum} and \eqref{eq:sigmaRBCFapprox}.
    If the uncertainty is under-approximated (${\hat{p}<p}$), \eqref{eq:idealcomp} is not satisfied and Theorem~\ref{theo:robust} cannot establish safety guarantees. Indeed, simulations capture safety degradation where $x(t)$ leaves $\Sc$; see the red curve in Fig.~\ref{fig:IVP}(b) for ${\hat{p}={p}/2}$.
    If the uncertainty is over-approximated (${\hat{p}>p}$), \eqref{eq:idealcomp} holds and Theorem~2 implies that the set $\Sc$ is safe.
    The corresponding simulation results, depicted in Fig.~\ref{fig:IVP}(b) as a green curve for ${\hat{p}=2{p}}$, comply with this.
    However, we observe conservative behavior where $x(t)$ evolves inside a smaller subset of $\Sc$.
\end{example}

\section{Parameterized Barrier Functions}
\label{sec:FBF}
To quantify safety degradation and conservativeness emerging from non-ideal compensation of uncertainties, we extend the RCBF-based safety guarantees by introducing the concept of {\em parameterized barrier function}.

Our key idea is to establish safety guarantees for other superlevel sets of $h$ than $\Sc$.
Thus, we introduce the set:
\begin{align}
\label{eq:Sstar}
    \Sc^* &\triangleq \left\{ x\in\R^n ~\left|~ H(x,h^*)\geq0 \right.\right\}, \\
    \partial\Sc^* &\triangleq \left\{ x\in\R^n ~\left|~ H(x,h^*)=0 \right.\right\}, 
\end{align}
where the function $H:\R^n\times\Hs\to\R$ is given as: 
\begin{equation}
\label{eq:FBF}
    H(x,h^*)\triangleq h(x)-h^*;
\end{equation}
with $h$ defining the set $\Sc$ in \eqref{eq:Sc} and a parameter ${h^* \in \Hs}$ to be determined. 
We have ${\Sc^*\supset\Sc}$ if ${h^*<0}$, ${\Sc^*=\Sc}$ if ${h^*=0}$, and ${\Sc^*\subset\Sc}$ if ${h^*>0}$; see Fig.~\ref{fig:intro}. We assume that the set $\Sc^*$ is nonempty and has no isolated points for any ${h^*\in\Hs}$.

\begin{definition}
    Function $H$ is a {\em parameterized barrier function (PBF)} for \eqref{eq:sysp} on $\Sc^*$ if $h$ is a RCBF for \eqref{eq:sysp} on $\Sc^*$ and $h^*$ is a regular value of $h$.
\end{definition}

The following theorem presents the conditions for safety of \eqref{eq:sysp} w.r.t. $\Sc^*$, and ultimately allows us to characterize safety degradation and conservativeness.
\begin{theorem} \label{theo:flexible}
    Let $H$ be a PBF for \eqref{eq:sysp} on $\Sc^*$ with ${h^*\in\Hs}$ and $\sigma$ satisfying:
    \begin{equation}
        \label{eq:FBF_cond}
    L_{\tilde{f}}h(t,x) + L_{\tilde{g}}h(t,x) u + \sigma(t,x,u) \geq \alpha(h^*),
    \end{equation}
    for all ${t \geq 0}$, ${x\in\partial\Sc^*}$ and ${u\in\R^m}$. Then, any controller ${u=k(t,x)\in K_{\rm RCBF}(t,x)}$ renders~\eqref{eq:sysp} safe w.r.t. $\Sc^*$.
\end{theorem}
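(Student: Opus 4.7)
The plan is to mirror the proof of Theorem~\ref{theo:robust}, but applied to the shifted level set $\{x : h(x) = h^*\}$ rather than $\{x : h(x) = 0\}$. Since $H(x,h^*)$ differs from $h(x)$ only by the constant $h^*$, we have $\dot{H} = \dot{h}$, and the set $\Sc^*$ is the $0$-superlevel set of $H$. By a standard Nagumo-type forward-invariance argument, it suffices to show that $\dot{h}(t,x,k(t,x)) \geq 0$ for every $t \geq 0$ and every $x \in \partial\Sc^*$.

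The key step is the boundary computation. Decomposing $\dot{h}$ along the uncertain dynamics~\eqref{eq:sysp} as in~\eqref{eq:hdot_p} gives $\dot{h}(t,x,u) = \dot{h}_{\rm n}(t,x,u) + L_{\tilde{f}}h(t,x) + L_{\tilde{g}}h(t,x)\,u$. Since $u = k(t,x) \in K_{\rm RCBF}(t,x)$, the definition in~\eqref{eq:KRCBF} yields $\dot{h}_{\rm n}(t,x,k(t,x)) - \sigma(t,x,k(t,x)) \geq -\alpha(h(x))$. Restricting to $x \in \partial\Sc^*$ allows me to replace $h(x)$ with $h^*$, and then adding the hypothesis~\eqref{eq:FBF_cond} with $u = k(t,x)$ causes the $\sigma$ and $\alpha(h^*)$ terms to cancel, producing $\dot{h}(t,x,k(t,x)) \geq -\alpha(h^*) + \alpha(h^*) = 0$. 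Invoking Nagumo's theorem (equivalently, Brezis' subtangentiality result) then delivers forward invariance of $\Sc^*$, which is by definition safety of~\eqref{eq:sysp} with respect to $\Sc^*$.

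The only nontrivial point I anticipate is verifying the hypotheses needed to invoke Nagumo's theorem on $\Sc^*$. Because the PBF definition explicitly requires $h^*$ to be a regular value of $h$, we have $\nabla h(x) \neq 0$ on $\partial\Sc^*$, so that $\partial\Sc^*$ is a smooth codimension-one manifold coinciding with the topological boundary of $\Sc^*$, and the tangent-cone characterization of forward invariance applies verbatim; together with $\Sc^*$ being nonempty and having no isolated points (assumed directly after~\eqref{eq:FBF}), the invocation is legitimate. This regularity assumption is precisely what is missing from a naive extension of Theorem~\ref{theo:robust} to level sets other than $\{h=0\}$, which is why it has been built into the PBF definition; with it in place, the result is essentially the translation invariance of Nagumo's condition under shifting the barrier by a constant.
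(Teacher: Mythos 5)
Your proof is correct and follows essentially the same route as the paper's: the same decomposition of $\dot{h}$, the same use of the $K_{\rm RCBF}$ inequality combined with \eqref{eq:FBF_cond} on the level set $h(x)=h^*$ to conclude $\dot{H}\geq 0$ on $\partial\Sc^*$, and the same appeal to regularity of $h^*$ plus a Nagumo-type invariance result (the paper cites a set-theoretic methods reference for this last step). Your added discussion of why the regular-value hypothesis makes the tangent-cone argument legitimate is a slightly more explicit version of what the paper leaves to the citation.
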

\begin{proof}
    $H$ is continuously differentiable since $h$ is a RCBF and $h^*$ is a constant, and we have:
    \begin{align}
        \dot{H}(t,x,u)
        \!=\!\dot{h}(t,x,u)
        \!=\!\dot{h}_{\rm n}(t,x,u)\!+\!L_{\tilde{f}}h(t,x)\!+\!L_{\tilde{g}}h(t,x)u.
    \end{align}
    For any controller ${u=k(t,x)\in K_{\rm RCBF}(t,x)}$ this yields:
    \begin{align}
    \dot{H}(t,x,k(t,x)) \geq -\alpha(h(x)) + \sigma(t,x,k(t,x))
    \nonumber \\
    + L_{\tilde{f}}h(t,x) + L_{\tilde{g}}h(t,x)k(t,x).
    \end{align}
    Considering $x\in\partial\Sc^*$, i.e., $h(x)=h^*$, \eqref{eq:FBF_cond} implies that
    \begin{align}
    \dot{H}(t,x,k(t,x)) &\geq 0.
    \end{align}
    Since $h^*$ is a regular value of $h$ we have that 0 is a regular value of $H$. Thus, the rest of the proof follows from \cite{blanchini2008set}. 
\end{proof}

It is noted that Theorem~\ref{theo:robust} is a special case of Theorem~3 with ${h^*=0}$.
Yet, Theorem~\ref{theo:flexible} has two main contributions over Theorem~\ref{theo:robust}, thanks to its parameterization by $h^*$. If \eqref{eq:FBF_cond} holds with ${h^*<0}$, condition \eqref{eq:idealcomp} may not hold and Theorem~\ref{theo:robust} cannot establish safety.
Still, Theorem~\ref{theo:flexible} provides  safety guarantees w.r.t. the set ${\Sc^* \supset \Sc}$, hence it quantifies safety degradation.
If \eqref{eq:FBF_cond} holds with ${h^*>0}$, condition \eqref{eq:idealcomp} also holds, and Theorem~\ref{theo:robust} establishes safety w.r.t. $\Sc$.
However, Theorem~\ref{theo:flexible} also states safety w.r.t. the set ${\Sc^* \subset \Sc}$, hence it quantifies conservativeness.
This is summarized for the case of the compensation term in~\eqref{eq:sigmaRBCFapprox} as follows.

\begin{corollary} \label{cor:flexible_rcbf}
Consider~\eqref{eq:sysp} with~\eqref{eq:pbar} and~\eqref{eq:sigmaRBCFapprox}.
Assume that there exist ${\underline{\delta}, \overline{\delta}: \Hs \to \R_{\geq 0}}$
such that for any ${h^* \in \Hs}$:
\begin{align}
\label{eq:delta}
    \underline{\delta}(h^*) \leq
    \|\nabla h(x)\|
    \leq \overline{\delta}(h^*),
\end{align}
${\forall x\in\partial\Sc^*}$.
If $H$ is a PBF for \eqref{eq:sysp} on $\Sc^*$ with $h^*$ defined by:
\begin{equation}
\label{eq:hstar_rcbf}
    \alpha(h^*) =
    \begin{cases}
        \overline{\delta}(h^*)(\hat{p}-p) & {\rm if}\ \hat{p}<p, \\
        \underline{\delta}(h^*)(\hat{p}-p) & {\rm if}\ \hat{p}>p,
    \end{cases}
\end{equation}
then ${u=k(t,x)\in K_{\rm RCBF}(t,x)}$ renders~\eqref{eq:sysp} safe w.r.t. $\Sc^*$.
\end{corollary}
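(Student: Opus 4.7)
The plan is to reduce the corollary to a direct application of Theorem~\ref{theo:flexible} by verifying that condition~\eqref{eq:FBF_cond} is satisfied when the compensation term takes the form~\eqref{eq:sigmaRBCFapprox} and $h^*$ is chosen via~\eqref{eq:hstar_rcbf}. The entire argument reduces to a Cauchy--Schwarz bound on the Lie derivative of $h$ along the unknown vector field, followed by a careful case split on the sign of $\hat{p}-p$.

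First I would use the hypotheses~\eqref{eq:pbar} to simplify~\eqref{eq:FBF_cond}. Since $\tilde{g}(t,x)\equiv 0$, the term $L_{\tilde{g}}h(t,x)u$ vanishes identically, so only the additive Lie derivative matters. Applying Cauchy--Schwarz together with the uniform bound $\|\tilde{f}(t,x)\|\leq p$ yields
\begin{equation*}
    L_{\tilde{f}}h(t,x) \;=\; \nabla h(x)\,\tilde{f}(t,x) \;\geq\; -\|\nabla h(x)\|\,p.
\end{equation*}
Substituting this bound and $\sigma(t,x,u)=\|\nabla h(x)\|\hat{p}$ from~\eqref{eq:sigmaRBCFapprox} into the left-hand side of~\eqref{eq:FBF_cond} gives the key pointwise inequality
\begin{equation*}
    L_{\tilde{f}}h(t,x) + L_{\tilde{g}}h(t,x)u + \sigma(t,x,u) \;\geq\; \|\nabla h(x)\|\,(\hat{p}-p),
\end{equation*}
valid for every $t\geq 0$, $x\in\R^n$, and $u\in\R^m$.

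Next I would split on the sign of $\hat{p}-p$ and invoke the gradient bounds~\eqref{eq:delta} on $\partial\Sc^*$. When $\hat{p}<p$ the factor $(\hat{p}-p)$ is negative, so the lower bound is made tighter by replacing $\|\nabla h(x)\|$ with its upper bound $\overline{\delta}(h^*)$, giving $\|\nabla h(x)\|(\hat{p}-p)\geq\overline{\delta}(h^*)(\hat{p}-p)$. When $\hat{p}>p$ the factor is positive, so using the lower bound $\underline{\delta}(h^*)$ yields $\|\nabla h(x)\|(\hat{p}-p)\geq\underline{\delta}(h^*)(\hat{p}-p)$. In both cases the choice of $h^*$ in~\eqref{eq:hstar_rcbf} makes the right-hand side equal to $\alpha(h^*)$, so~\eqref{eq:FBF_cond} holds on $\partial\Sc^*$. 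Since $H$ is a PBF by assumption, Theorem~\ref{theo:flexible} then delivers the desired safety of~\eqref{eq:sysp} with respect to $\Sc^*$ for any ${u=k(t,x)\in K_{\rm RCBF}(t,x)}$.

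The main subtlety, and the step I would scrutinize most carefully, is the direction-flipping in the case $\hat{p}<p$: one must apply the upper gradient bound (not the lower one) precisely because the multiplicative factor is negative, which is easy to get backwards. A secondary point worth flagging explicitly is well-posedness of~\eqref{eq:hstar_rcbf}: it is an implicit equation for $h^*$ (since $\overline{\delta}$ and $\underline{\delta}$ depend on $h^*$), and it is presumed to admit a solution in $\Hs$ at which $h^*$ is a regular value of $h$, so that $H$ qualifies as a PBF on the corresponding set $\Sc^*$; beyond this, no further machinery is needed.
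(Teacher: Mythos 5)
Your proof is correct and follows essentially the same route as the paper's: both establish the pointwise bound $L_{\tilde{f}}h + L_{\tilde{g}}h\,u + \sigma \geq \|\nabla h(x)\|(\hat{p}-p)$ from \eqref{eq:pbar} and \eqref{eq:sigmaRBCFapprox}, then combine \eqref{eq:delta} with \eqref{eq:hstar_rcbf} to obtain \eqref{eq:FBF_cond} and invoke Theorem~\ref{theo:flexible}. Your explicit case split on the sign of $\hat{p}-p$ (using $\overline{\delta}$ when the factor is negative, $\underline{\delta}$ when positive) and your remark on the implicit nature of \eqref{eq:hstar_rcbf} simply spell out details the paper leaves terse.
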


\begin{proof}
The choice~\eqref{eq:sigmaRBCFapprox} of the robustifying term $\sigma$ implies:
\begin{equation}
\label{eq:ex2_temp}
    L_{\tilde{f}}h(t,x) + L_{\tilde{g}}h(t,x)u + \sigma(t,x,u) \geq \|\nabla h(x)\| (\hat{p}-p).
\end{equation}
This leads to~\eqref{eq:FBF_cond} by using~\eqref{eq:delta} and~\eqref{eq:hstar_rcbf}, and the rest of the proof follows from Theorem~\ref{theo:flexible}.
\end{proof}

\begin{remark}
The value of $h^*$ given by~\eqref{eq:hstar_rcbf} quantifies safety degradation and conservativeness.
If the uncertainty is under-approximated (${\hat{p}<p}$), \eqref{eq:hstar_rcbf} yields ${h^*<0}$ and ${\Sc^* \supset \Sc}$, while over-approximation (${\hat{p}>p}$) leads to ${h^*>0}$ and ${\Sc^* \subset \Sc}$.
\end{remark}

\begin{example}
\label{ex:IVP_FBF}
Consider the setup of Example~\ref{ex:robust}.
Based on \eqref{eq:IVP_h}, we get ${\nabla h(x) = -Ax}$,  and
it can be shown that \eqref{eq:delta} holds for any ${h^* \in \Hs = (-\infty,1]}$ and for all ${x\in\partial\Sc^*}$
with
${\underline{\delta}(h^*) = \sqrt{2 \underline{\lambda} (1-h^*)}}$
and
${\overline{\delta}(h^*) = \sqrt{2 \overline{\lambda} (1-h^*)}}$,
where ${0 < \underline{\lambda}\leq \overline{\lambda}}$ are the eigenvalues of $A$.
Since any ${h^*<1}$ is a regular value of $h$,
Corollary~\ref{cor:flexible_rcbf} establishes safety w.r.t. the set $\Sc^*$ with $h^*$ given by~\eqref{eq:hstar_rcbf}.

The value of $h^*$ is depicted in Fig.~\ref{fig:FBF1} with dashed line along with $h(x(t))$ corresponding to the simulated trajectories in Fig.~\ref{fig:IVP}(b). Observe that for the case of under-approximation (${\hat{p}<p}$, red) the PBF framework successfully quantifies safety degradation by a lower bound ${h^*<0}$ for $h(x(t))$, which complies with the simulation results. For the case of over-approximation (${\hat{p}>p}$, green) the bound ${h^*>0}$ captures the safe but conservative system behavior.

\begin{figure}
	\centering
    \includegraphics[width=.39\textwidth, valign = t]{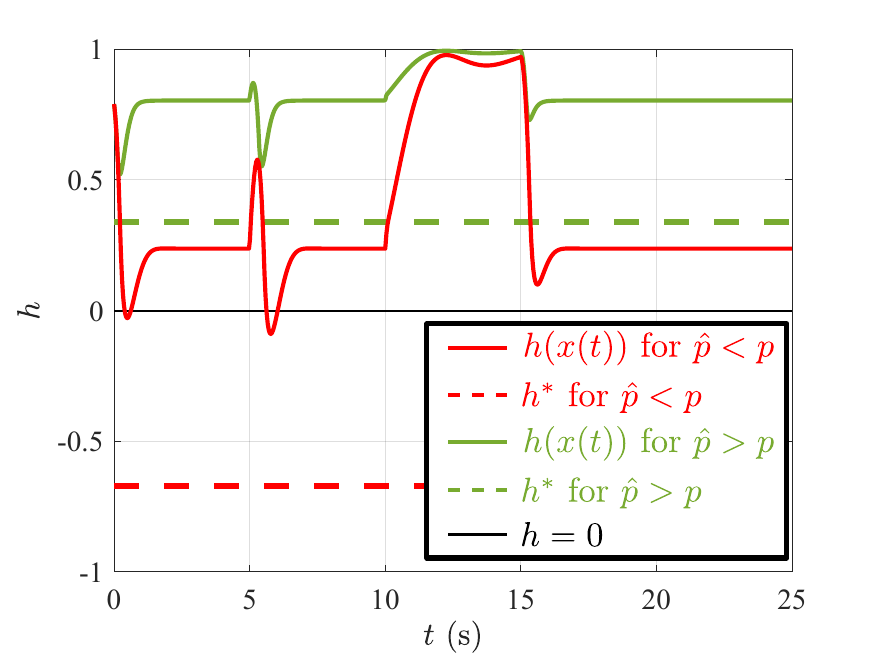}
	\caption{The time evolution of the RCBF, $h(x(t))$, for the simulations of Example~\ref{ex:robust}, and the safety guarantees established in Example~\ref{ex:IVP_FBF} using the PBF framework in the form of a lower bound $h^*$ for $h(x(t))$.}
	\label{fig:FBF1}
\vspace{-6 mm}
\end{figure}

\end{example}

\section{Input-to-State Safety \\ via Parameterized Barrier Functions}
\label{sec:ISSf}
A well-known existing concept proposed to characterize safety degradation is input-to-state safety (ISSf) \cite{kolathaya2018input}\footnote[3]{Although ISSf was originally proposed for matched input disturbances, in this study we extend it for additive type of uncertainties $\tilde{f}$.}. 
In this section we show that ISSf is a special case of
the PBF framework (restricted to ${h^*<0}$). 
Then, we propose a method that endows ISSf-CBF-based controllers with more accurate safety guarantees (including ${h^*=0}$ and ${h^*>0}$). 

In essence, ISSf gives ways to quantify safety degradation in the presence of a bounded disturbance such as \eqref{eq:pbar}. This inspired PBFs, as ISSf considers safety degradation in the context of safety guarantees for another superlevel set of $h$:
\begin{align}
\label{eq:Sc_ISSf}
    \Sc_{\rm ISSf} \!=\! \left\{ x\in\R^n  \left|  h(x) - \alpha^{-1}\left( -\frac{\epsilon(h(x)) {p}^2}{4} \right) \!\geq\! 0 \right.\right\},
\end{align}
with a continuously differentiable function ${\epsilon:\Hs\to\R_{>0}}$ that satisfies
${\der{\epsilon}{r}(r) \geq 0}$, ${\forall r\in\Hs}$ and ${\alpha^{-1}\in\K_\infty^{\rm e}}$ \cite{alan2021safe}. 
ISSf-CBFs provide controllers with safety guarantees w.r.t. $\Sc_{\rm ISSf}$:
\begin{definition}
    A continuously differentiable function $h$ is an {\em input-to-state safe control barrier function (ISSf-CBF)} for \eqref{eq:sysp} if there exist a function $\alpha\in\K_\infty^{\rm e}$ such that the following holds for all $t\geq0$ and $x\in\R^n$:
\begin{align}
\label{eq:ISSfCBF}
    \sup_{u\in\R^m} \left[ \dot{h}_{\rm n}(t,x,u)\right] > -\alpha(h(x)) + \frac{\|\nabla h(x) \|^2}{\epsilon(h(x))}.
\end{align}
\end{definition}

\begin{remark}
\label{rem:tunability}
While the original ISSf formulation in \cite{kolathaya2018input} considers ${\der{\epsilon}{r}(r) = 0}$, our work in \cite{Alan__arxiv:22} shows that controller performance can be improved by choosing ${\der{\epsilon}{r}(r) > 0}$ through experiments with an automated truck. 
\end{remark}

Theorem~3 in \cite{alan2021safe} establishes safety for \eqref{eq:sysp} w.r.t. $\Sc_{\rm ISSf}$, if the controller takes values in the non-empty set:
\begin{align}
\resizebox{1\hsize}{!}{
    $K_{\rm ISSf}(t,x) = \big\{ u\in\R^m  ~|~  \dot{h}_{\rm n}(t,x,u) \geq - \alpha(h(x))  + \frac{\|\nabla h(x) \|^2}{\epsilon(h(x))} \big\}$.
    }\!
    \label{eq:KISSf}
\end{align}
In the next theorem, we link ISSf-CBFs to the PBF framework and establish the same result via PBFs.
\begin{theorem}
    \label{theo:ISSf}
    If $h$ is an ISSf-CBF for \eqref{eq:sysp} with \eqref{eq:pbar}, then $H$ is a PBF for this system on $\Sc^*$ with:
    \begin{equation}
    \label{eq:sigmaISSf}
        \sigma(t,x,u)=\frac{\|\nabla h(x)\|^2}{\epsilon(h(x))},
    \end{equation}
    and $h^*$ defined by:
    \begin{equation}
    \label{eq:ISSf_temp2}
        h^* = \alpha^{-1} \left( -{\epsilon(h^*) {p}^2}/{4}  \right) < 0.
    \end{equation} 
    Furthermore, any controller ${u=k(t,x)\in K_{\rm ISSf}(t,x)}$ renders~\eqref{eq:sysp} safe w.r.t. ${\Sc^*=\Sc_{\rm ISSf}\supset \Sc}$.
\end{theorem}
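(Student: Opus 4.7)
The plan is to recover the theorem as a direct application of Theorem~\ref{theo:flexible}, by matching the ISSf-CBF data to the PBF framework. The first observation is that with the choice $\sigma(t,x,u)=\|\nabla h(x)\|^2/\epsilon(h(x))$ the RCBF inequality~\eqref{eq:RCBF} coincides termwise with the ISSf-CBF inequality~\eqref{eq:ISSfCBF}; hence the hypothesis that $h$ is an ISSf-CBF immediately yields that $h$ is an RCBF for~\eqref{eq:sysp} on $\Sc^*$ with this $\sigma$. Together with $h^*$ being a regular value of $h$, this makes $H = h - h^*$ a PBF and also shows that $K_{\rm RCBF}(t,x) = K_{\rm ISSf}(t,x)$.

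Next I would verify the key inequality~\eqref{eq:FBF_cond} on $\partial\Sc^*$. Since~\eqref{eq:pbar} gives $\tilde{g}\equiv 0$ and $\|\tilde{f}(t,x)\|\leq p$, the Cauchy--Schwarz inequality provides $L_{\tilde{f}}h(t,x)\geq -\|\nabla h(x)\|\,p$. Adding $\sigma$ and completing the square in $\|\nabla h(x)\|/\sqrt{\epsilon(h(x))}$ yields the uniform lower bound
\begin{equation*}
L_{\tilde{f}}h(t,x) + L_{\tilde{g}}h(t,x)u + \sigma(t,x,u)
\ \geq\ -\frac{\epsilon(h(x))\,p^2}{4}.
\end{equation*}
On $\partial\Sc^*$ the identity $h(x)=h^*$ combined with the implicit definition~\eqref{eq:ISSf_temp2} rearranges to $-\epsilon(h^*)p^2/4 = \alpha(h^*)$, matching the right-hand side of~\eqref{eq:FBF_cond}. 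Theorem~\ref{theo:flexible} then delivers safety of~\eqref{eq:sysp} with respect to $\Sc^*$ for any $u=k(t,x)\in K_{\rm ISSf}(t,x)$.

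To close the argument, I would identify $\Sc^*$ with $\Sc_{\rm ISSf}$ and show $\Sc^*\supset\Sc$. Setting $\phi(r) \defeq r - \alpha^{-1}(-\epsilon(r)p^2/4)$, the strict monotonicity of $\alpha^{-1}\in\K_\infty^{\rm e}$ together with $\der{\epsilon}{r}\geq0$ and $\epsilon>0$ makes $\phi$ strictly increasing, so the condition $\phi(h(x))\geq 0$ defining $\Sc_{\rm ISSf}$ in~\eqref{eq:Sc_ISSf} is equivalent to $h(x)\geq h^*$, where $h^*$ is the unique root of $\phi$ given by~\eqref{eq:ISSf_temp2}. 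Moreover $\alpha(h^*) = -\epsilon(h^*)p^2/4<0$ forces $h^*<0$, whence $\Sc^*\supset\Sc$. The principal obstacle I anticipate is the implicit nature of~\eqref{eq:ISSf_temp2}: the existence, uniqueness, and regularity of $h^*$ as a value of $h$ have to be justified from the monotonicity of $\phi$ and from structural properties of $h$ rather than assumed, and this is where the proof must be careful about well-posedness before applying Theorem~\ref{theo:flexible}.
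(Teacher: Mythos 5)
Your proposal is correct and follows essentially the same route as the paper's proof: match the ISSf-CBF inequality to the RCBF inequality via the choice of $\sigma$, complete the square to obtain the bound $-\epsilon(h(x))p^2/4$, use the implicit equation \eqref{eq:ISSf_temp2} to verify \eqref{eq:FBF_cond} on $\partial\Sc^*$, and invoke Theorem~\ref{theo:flexible}. Your monotone function $\phi$ just makes explicit the uniqueness and set-identification argument the paper states in one line, and the regularity of $h^*$ that you flag is exactly the point the paper delegates to the strict inequality in \eqref{eq:ISSfCBF} and the proof of Theorem~3 in \cite{alan2021safe}.
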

\begin{proof}
    First, we observe that \eqref{eq:ISSf_temp2} has a unique solution $h^*$ based on the monotonicity properties of $\alpha^{-1}$ and $\epsilon$.
    Furthermore, ${\Sc^*=\Sc_{\rm ISSf}}$ based on \eqref{eq:Sc_ISSf} and \eqref{eq:ISSf_temp2}, while the property ${\epsilon(r)>0}$ for all ${r\in\Hs}$ yields ${h^*<0}$ and ${\Sc^* \supset \Sc}$.
    Moreover, $h^*$ is a regular value of $h$ thanks to the strict inequality in \eqref{eq:ISSfCBF}; please refer to the proof of Theorem~3 in \cite{alan2021safe} for details.
    Hence, comparing \eqref{eq:ISSfCBF} with \eqref{eq:RCBF} and \eqref{eq:sigmaISSf} establishes that $H$ is a PBF.
    Finally, by noticing that ${\frac{\|\nabla h(x)\|^2}{\epsilon(h(x))} - \|\nabla h(x) \| {p} \geq -\frac{\epsilon(h(x)) {p}^2}{4}}$, \eqref{eq:pbar} and \eqref{eq:sigmaISSf} yield:
    \begin{align}
    \label{eq:ISSf_temp}
        L_{\tilde{g}}h(t,x)u + L_{\tilde{f}}h(t,x) + \sigma(t,x,u) \!\geq\! -{\epsilon(h(x)) {p}^2}/{4}.
    \end{align}
    This inequality and \eqref{eq:ISSf_temp2} imply that condition \eqref{eq:FBF_cond} in Theorem~\ref{theo:flexible} holds, therefore \eqref{eq:sysp} is safe w.r.t. $\Sc^*$.
\end{proof}

Next, we derive more accurate safety guarantees for ISSf-CBF-based controllers via the PBF framework.
\begin{corollary}
Consider~\eqref{eq:sysp} with~\eqref{eq:pbar} and~\eqref{eq:sigmaISSf}.
Assume that there exists ${\underline{\delta}: \Hs \to \R_{\geq 0}}$
such that for any ${h^* \in \Hs}$:
\begin{align}
\label{eq:deltalow}
    \underline{\delta}(h^*) \leq
    \|\nabla h(x)\|,
\end{align}
${\forall x\in\partial\Sc^*}$.
If $H$ is a PBF for \eqref{eq:sysp} on $\Sc^*$ with $h^*$ defined by:
\begin{equation}
\label{eq:FBF_temp3}
    h^* = \alpha^{-1} \left( {\underline{\delta}(h^*)^2}/{\epsilon(h^*)} - \underline{\delta}(h^*) {p}  \right)
\end{equation}
and ${\epsilon(h^*) \leq 2\underline{\delta}(h^*)/p}$ holds,
then ${u\!=\!k(t,x)\!\in\! K_{\rm RCBF}(t,x)}$ renders~\eqref{eq:sysp} safe w.r.t. ${\Sc^* \subseteq \Sc_{\rm ISSf}}$.
\end{corollary}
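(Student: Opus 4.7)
The plan is to invoke Theorem~\ref{theo:flexible} to establish safety with respect to $\Sc^*$, and then to show $\Sc^*\subseteq\Sc_{\rm ISSf}$ by a direct algebraic comparison of the two sets.

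First, I would verify condition~\eqref{eq:FBF_cond} on $\partial\Sc^*$. Since $\tilde{g}\equiv 0$ and $\|\tilde{f}\|\leq p$ by~\eqref{eq:pbar}, Cauchy--Schwarz gives $L_{\tilde{f}}h(t,x)+L_{\tilde{g}}h(t,x)u \geq -\|\nabla h(x)\|\,p$. Using~\eqref{eq:sigmaISSf} together with $h(x)=h^*$ on $\partial\Sc^*$, the left-hand side of~\eqref{eq:FBF_cond} is therefore at least
\begin{equation*}
q(\|\nabla h(x)\|) \triangleq \frac{\|\nabla h(x)\|^{2}}{\epsilon(h^*)} - \|\nabla h(x)\|\,p .
\end{equation*}
By definition~\eqref{eq:FBF_temp3} of $h^*$, this quadratic satisfies $q(\underline{\delta}(h^*))=\alpha(h^*)$, so the task reduces to showing $q(\|\nabla h(x)\|)\geq q(\underline{\delta}(h^*))$ whenever $x\in\partial\Sc^*$.

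The map $y\mapsto y^{2}/\epsilon(h^*)-yp$ is a convex parabola with minimum at $y=p\,\epsilon(h^*)/2$. The side hypothesis $\epsilon(h^*)\leq 2\underline{\delta}(h^*)/p$ rearranges to $\underline{\delta}(h^*)\geq p\,\epsilon(h^*)/2$, placing $\underline{\delta}(h^*)$ on the non-decreasing branch of the parabola. Combined with $\|\nabla h(x)\|\geq\underline{\delta}(h^*)$ from~\eqref{eq:deltalow}, monotonicity of $q$ on that branch yields $q(\|\nabla h(x)\|)\geq q(\underline{\delta}(h^*))=\alpha(h^*)$. This verifies~\eqref{eq:FBF_cond}, and Theorem~\ref{theo:flexible} then delivers safety of~\eqref{eq:sysp} with respect to $\Sc^*$.

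Finally, to obtain $\Sc^*\subseteq\Sc_{\rm ISSf}$, I would complete the square in~\eqref{eq:FBF_temp3}:
\begin{equation*}
\alpha(h^*) = \left(\frac{\underline{\delta}(h^*)}{\sqrt{\epsilon(h^*)}} - \frac{p\sqrt{\epsilon(h^*)}}{2}\right)^{\!2} - \frac{\epsilon(h^*)\,p^{2}}{4} \geq -\frac{\epsilon(h^*)\,p^{2}}{4},
\end{equation*}
so $h^*\geq\alpha^{-1}\bigl(-\epsilon(h^*)p^{2}/4\bigr)$. The map $r\mapsto r-\alpha^{-1}\bigl(-\epsilon(r)p^{2}/4\bigr)$ is non-decreasing because $\epsilon$ is non-decreasing and $\alpha^{-1}\in\classKinfty^{\rm e}$ is strictly increasing, so any $x$ with $h(x)\geq h^*$ satisfies the defining inequality of $\Sc_{\rm ISSf}$ in~\eqref{eq:Sc_ISSf}. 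I expect the main obstacle to be the quadratic step: the bound $\|\nabla h(x)\|\geq\underline{\delta}(h^*)$ does not directly lower-bound $q$ since $q$ is non-monotone on $\R_{\geq 0}$, and the side hypothesis $\epsilon(h^*)\leq 2\underline{\delta}(h^*)/p$ is precisely what places $\underline{\delta}(h^*)$ on the increasing branch so that the bound transfers cleanly to $\alpha(h^*)$.
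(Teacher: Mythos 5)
Your proposal is correct and follows essentially the same route as the paper's proof: establish the lower bound $L_{\tilde{g}}h\,u + L_{\tilde{f}}h + \sigma \geq \underline{\delta}(h^*)^2/\epsilon(h^*) - \underline{\delta}(h^*)p$ on $\partial\Sc^*$, invoke Theorem~\ref{theo:flexible}, and then show $h^* \geq \alpha^{-1}\bigl(-\epsilon(h^*)p^2/4\bigr)$ for the inclusion $\Sc^*\subseteq\Sc_{\rm ISSf}$. In fact you supply two details the paper leaves implicit --- the convex-parabola monotonicity argument explaining why the side hypothesis $\epsilon(h^*)\leq 2\underline{\delta}(h^*)/p$ is needed, and the monotonicity of $r\mapsto r-\alpha^{-1}\bigl(-\epsilon(r)p^2/4\bigr)$ that turns the pointwise inequality into the set inclusion --- both of which are accurate.
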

\begin{proof}
    Based on \eqref{eq:pbar}, \eqref{eq:sigmaISSf} and ${\epsilon(h^*) \leq 2\underline{\delta}(h^*)/p}$, it can be shown that the following holds for all $x \in \partial \Sc^*$:
    \begin{equation}
    \label{eq:FBF_temp}
        L_{\tilde{g}}h(t,x)u + L_{\tilde{f}}h(t,x) + \sigma(t,x,u) \geq \frac{\underline{\delta}(h^*)^2}{\epsilon(h^*)} - \underline{\delta}(h^*) {p}.
    \end{equation}
    Using~\eqref{eq:FBF_temp3} this leads to~\eqref{eq:FBF_cond},  and Theorem~\ref{theo:flexible} establishes safety w.r.t. $\Sc^*$.
    Moreover, ${h^* \geq \alpha^{-1} \big( -\frac{\epsilon(h^*) {p}^2}{4}  \big)}$ holds, thus ${\Sc^* \subseteq \Sc_{\rm ISSf}}$. 
\end{proof}

\begin{remark}
    Since ${\Sc^* \subseteq \Sc_{\rm ISSf}}$, the PBF framework provides a tighter safety guarantee than ISSf theory. 
    Indeed, all cases of ${h^*<0}$, ${h^*=0}$ and ${h^*>0}$ can occur in \eqref{eq:FBF_temp3}, corresponding to safety degradation, safety and conservativeness. 
\end{remark}

\begin{example}

\begin{figure}[t]
	\centering
    \includegraphics[width=.4\textwidth, valign = t]{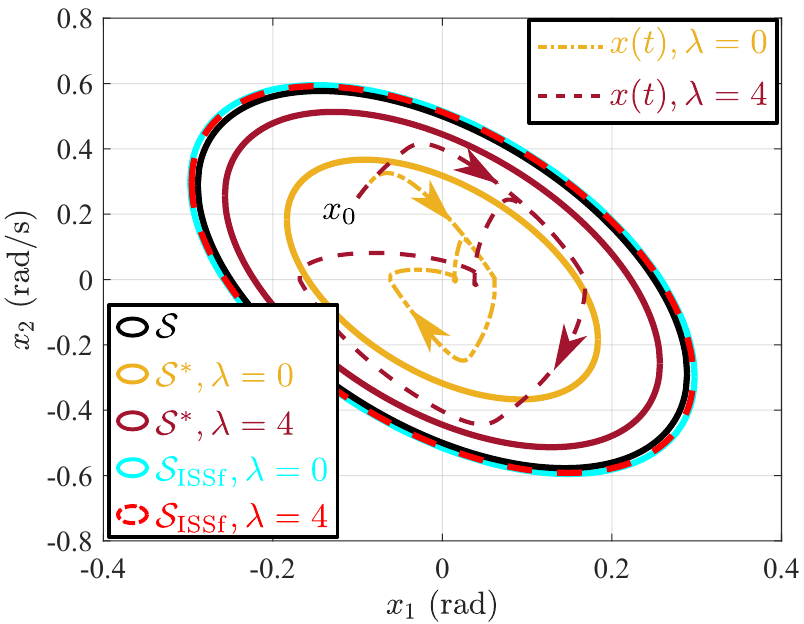}
	\caption{Safety guarantees established in Example~\ref{ex:ISSf} for ${\epsilon_0=1}$ with ${\lambda=0}$ and ${\lambda=4}$ using the ISSf approach ($\Sc_{\rm ISSf}$) and the PBF framework ($\Sc^*$).
    While the ISSf case (cyan and red ellipses) does not capture the conservative behavior,
    the PBF framework yields more accurate safety guarantees (orange and brown ellipses) for simulated trajectories (orange and brown curves). }
	\label{fig:ISSf}
 \vspace{-6 mm}
\end{figure}

\label{ex:ISSf}
Consider the inverted pendulum problem in Example~\ref{ex:pendulum}. 
We utilize the controller \eqref{eq:QP} with $\sigma$ in \eqref{eq:sigmaISSf},
${\epsilon(r)=\epsilon_0{\rm e}^{\lambda r}}$,
${\epsilon_0>0}$ and ${\lambda\geq0}$. 
Two simulation results are given in Fig.~\ref{fig:ISSf}, with ${\epsilon_0=1}$, ${\lambda=0}$ (orange dashed-dotted curve), and ${\epsilon_0=1}$, ${\lambda=4}$ (brown dashed curve).
Both simulated trajectories stay within $\Sc$. 
Indeed, while the former parameter pair yields a more conservative result, introducing $\lambda$ alleviates the conservativeness as discussed in Remark~\ref{rem:tunability}. 

Boundaries of the corresponding $\Sc_{\rm ISSf}$ sets, calculated by solving \eqref{eq:ISSf_temp2}, are also plotted by cyan solid and red dashed ellipses. As expected, these sets obtained from the ISSf theory fail to evaluate the conservativeness.
The boundaries of the sets $\Sc^*$, after solving \eqref{eq:FBF_temp3}, are plotted by orange and brown solid lines in Fig.~\ref{fig:ISSf}.
Indeed, they are more accurate bounds on the trajectories of the system. 
This shows that the PBF framework provides flexibility to quantify conservativeness.

\end{example}

\vspace{-2 mm}
\section{Conclusion}
\label{sec:conclusion}
This work focused on establishing safety guarantees for control systems with uncertainties.
We proposed parameterized barrier functions (PBFs) that generalize existing robust control barrier function (RCBF) formulations addressing robust safety.
We highlighted that the PBF framework offers flexibility  to evaluate not only safety, but safety degradation and conservativeness of RCBF-based controllers. 
Moreover, we showed that input-to-state safety (ISSf) can be viewed as a special case of the PBF framework, and we derived improved safety guarantees for ISSf-CBF-based controllers.
Future directions include merging the PBF framework with data-driven schemes to obtain online safety guarantees.

\bibliographystyle{IEEEtran}
\bibliography{Bib/Alan_bib}

\begin{thebibliography}{10}
\providecommand{\url}[1]{#1}
\csname url@samestyle\endcsname
\providecommand{\newblock}{\relax}
\providecommand{\bibinfo}[2]{#2}
\providecommand{\BIBentrySTDinterwordspacing}{\spaceskip=0pt\relax}
\providecommand{\BIBentryALTinterwordstretchfactor}{4}
\providecommand{\BIBentryALTinterwordspacing}{\spaceskip=\fontdimen2\font plus
\BIBentryALTinterwordstretchfactor\fontdimen3\font minus
  \fontdimen4\font\relax}
\providecommand{\BIBforeignlanguage}[2]{{%
\expandafter\ifx\csname l@#1\endcsname\relax
\typeout{** WARNING: IEEEtran.bst: No hyphenation pattern has been}%
\typeout{** loaded for the language `#1'. Using the pattern for}%
\typeout{** the default language instead.}%
\else
\language=\csname l@#1\endcsname
\fi
#2}}
\providecommand{\BIBdecl}{\relax}
\BIBdecl

\bibitem{ames2017control}
A.~D. Ames, X.~Xu, J.~W. Grizzle, and P.~Tabuada, ``Control barrier function
  based quadratic programs for safety critical systems,'' \emph{Transactions on
  Automatic Control}, vol.~62, no.~8, pp. 3861--3876, 2017.

\bibitem{lindemann2019control}
L.~Lindemann and D.~V. Dimarogonas, ``Control barrier functions for multi-agent
  systems under conflicting local signal temporal logic tasks,'' \emph{IEEE
  Control Systems Letters}, vol.~3, no.~3, pp. 757--762, 2019.

\bibitem{cortez2020control}
W.~Shaw~Cortez, D.~Oetomo, C.~Manzie, and P.~Choong, ``Control barrier
  functions for mechanical systems: Theory and application to robotic
  grasping,'' \emph{IEEE Transactions on Control Systems Technology}, vol.~29,
  no.~2, pp. 530--545, 2021.

\bibitem{thyri2020reactive}
E.~H. Thyri, E.~A. Basso, M.~Breivik, K.~Y. Pettersen, R.~Skjetne, and A.~M.
  Lekkas, ``Reactive collision avoidance for {ASVs} based on control barrier
  functions,'' in \emph{Conference on Control Technology and Applications
  (CCTA)}.\hskip 1em plus 0.5em minus 0.4em\relax IEEE, 2020, pp. 380--387.

\bibitem{Alan__arxiv:22}
A.~Alan, A.~J. Taylor, C.~R. He, A.~D. Ames, and G.~Orosz, ``Control barrier
  functions and input-to-state safety with application to automated vehicles,''
  \emph{arXiv preprint arXiv:2206.03568}, 2022.

\bibitem{xu2015robustness}
X.~Xu, P.~Tabuada, J.~W. Grizzle, and A.~D. Ames, ``Robustness of control
  barrier functions for safety critical control,'' \emph{IFAC-PapersOnLine},
  vol.~48, no.~27, pp. 54--61, 2015.

\bibitem{jankovic2018robust}
M.~Jankovic, ``Robust control barrier functions for constrained stabilization
  of nonlinear systems,'' \emph{Automatica}, vol.~96, pp. 359--367, 2018.

\bibitem{black2022adaptiveKoopman}
M.~Black and D.~Panagou, ``Safe control design for unknown nonlinear systems
  with {Koopman}-based fixed-time identification,'' \emph{arXiv preprint
  arXiv:2212.00624}, 2022.

\bibitem{alan2022dob}
A.~Alan, T.~G. Molnar, E.~Daş, A.~D. Ames, and G.~Orosz, ``Disturbance
  observers for robust safety-critical control with control barrier
  functions,'' \emph{IEEE Control Systems Letters}, vol.~7, pp. 1123--1128,
  2023.

\bibitem{lopez2020robust}
B.~T. Lopez, J.-J.~E. Slotine, and J.~P. How, ``Robust adaptive control barrier
  functions: An adaptive and data-driven approach to safety,'' \emph{IEEE
  Control Systems Letters}, vol.~5, no.~3, pp. 1031--1036, 2020.

\bibitem{isaly2021adaptive}
A.~Isaly, O.~S. Patil, R.~G. Sanfelice, and W.~E. Dixon, ``Adaptive safety with
  multiple barrier functions using integral concurrent learning,'' in
  \emph{2021 American Control Conference (ACC)}, 2021, pp. 3719--3724.

\bibitem{cohen2022robust}
M.~H. Cohen, C.~Belta, and R.~Tron, ``Robust control barrier functions for
  nonlinear control systems with uncertainty: A duality-based approach,'' in
  \emph{61st IEEE Conference on Decision and Control (CDC)}, 2022, pp.
  174--179.

\bibitem{buch2022robust}
J.~Buch, S.-C. Liao, and P.~Seiler, ``Robust control barrier functions with
  sector-bounded uncertainties,'' \emph{IEEE Control Systems Letters}, vol.~6,
  pp. 1994--1999, 2022.

\bibitem{taylor2021data}
A.~J. Taylor, V.~D. Dorobantu, S.~Dean, B.~Recht, Y.~Yue, and A.~D. Ames,
  ``Towards robust data-driven control synthesis for nonlinear systems with
  actuation uncertainty,'' in \emph{60th IEEE Conference on Decision and
  Control (CDC)}, 2021, pp. 6469--6476.

\bibitem{emam2022data}
Y.~Emam, P.~Glotfelter, S.~Wilson, G.~Notomista, and M.~Egerstedt,
  ``Data-driven robust barrier functions for safe, long-term operation,''
  \emph{IEEE Transactions on Robotics}, vol.~38, no.~3, pp. 1671--1685, 2022.

\bibitem{jin2023data}
Z.~Jin, M.~Khajenejad, and S.~Z. Yong, ``Robust data-driven control barrier
  functions for unknown continuous control affine systems,'' \emph{IEEE Control
  Systems Letters}, vol.~7, pp. 1309--1314, 2023.

\bibitem{taylor2020learning}
A.~J. Taylor, A.~Singletary, Y.~Yue, and A.~D. Ames, ``Learning for
  safety-critical control with control barrier functions,'' \emph{Proceedings
  of Machine Learning Research (PMLR)}, vol. 120, pp. 708--717, 2020.

\bibitem{csomay2021episodic}
N.~Csomay-Shanklin, R.~K. Cosner, M.~Dai, A.~J. Taylor, and A.~D. Ames,
  ``Episodic learning for safe bipedal locomotion with control barrier
  functions and projection-to-state safety,'' \emph{Proceedings of Machine
  Learning Research (PMLR)}, vol. 144, pp. 1041--1053, 2021.

\bibitem{castaneda2021pointwise}
F.~Castañeda, J.~J. Choi, B.~Zhang, C.~J. Tomlin, and K.~Sreenath, ``Pointwise
  feasibility of {G}aussian {P}rocess-based safety-critical control under model
  uncertainty,'' in \emph{60th IEEE Conference on Decision and Control (CDC)},
  2021, pp. 6762--6769.

\bibitem{akella2022GP}
P.~Akella, S.~X. Wei, J.~W. Burdick, and A.~D. Ames, ``Learning disturbances
  online for risk-aware control: Risk-aware flight with less than one minute of
  data,'' \emph{arXiv preprint arXiv:2212.06253}, 2022.

\bibitem{kolathaya2018input}
S.~Kolathaya and A.~D. Ames, ``Input-to-state safety with control barrier
  functions,'' \emph{IEEE Control Systems Letters}, vol.~3, no.~1, pp.
  108--113, 2018.

\bibitem{alan2021safe}
A.~Alan, A.~J. Taylor, C.~R. He, G.~Orosz, and A.~D. Ames, ``Safe controller
  synthesis with tunable input-to-state safe control barrier functions,''
  \emph{IEEE Control Systems Letters}, vol.~6, pp. 908--913, 2022.

\bibitem{cosner2022safetyaware}
R.~Cosner, M.~Tucker, A.~Taylor, K.~Li, T.~Molnar, W.~Ubelacker, A.~Alan,
  G.~Orosz, Y.~Yue, and A.~Ames, ``Safety-aware preference-based learning for
  safety-critical control,'' \emph{Proceedings of Machine Learning Research
  (PMLR)}, vol. 168, pp. 1020--1033, 2022.

\bibitem{perko2013differential}
L.~Perko, \emph{Differential equations and dynamical systems}.\hskip 1em plus
  0.5em minus 0.4em\relax Springer Science \& Business Media, 2013, vol.~7.

\bibitem{agrawal2023stateobserver}
D.~R. Agrawal and D.~Panagou, ``Safe and robust observer-controller synthesis
  using control barrier functions,'' \emph{IEEE Control Systems Letters},
  vol.~7, pp. 127--132, 2023.

\bibitem{blanchini2008set}
F.~Blanchini and S.~Miani, \emph{Set-theoretic Methods in Control}.\hskip 1em
  plus 0.5em minus 0.4em\relax Springer, 2008.

\end{thebibliography}

\end{document}